\documentclass[a4,11pt]{reportN}

\usepackage[english]{babel}    

\usepackage[latin1]{inputenc}   

\usepackage{graphics}
\usepackage{subfigure}
\usepackage{graphicx}
\usepackage{epsfig}
\usepackage[centertags]{amsmath}
\usepackage{graphicx,indentfirst,amsmath,amsfonts,amssymb,amsthm,newlfont}
\usepackage{longtable}
\usepackage[usenames,dvipsnames,svgnames,table]{xcolor}

\usepackage{verbatim}
\usepackage{multicol}

\usepackage{psfrag} 

\usepackage[noadjust]{cite}

\bibliographystyle{abbrv}

\newtheorem{theo}{Theorem}

\begin{document}

\title{The Lower Bound Error for polynomial NARMAX using an Arbitrary Number of Natural Interval Extensions}

\author
    { \Large{Priscila F. S. Guedes}\thanks{pri12\_guedes@hotmail.com} \\
   \Large{M\'arcia L. C. Peixoto}\thanks{marciapeixoto93@hotmail.com} \\ 
   \Large{Al\'ipio M. Barbosa}\thanks {alipiomonteiro@yahoo.com.br} \\ 
 \Large{Samir A. M. Martins}\thanks{martins@ufsj.edu.br} \\  
 \Large{Erivelton G. Nepomuceno}\thanks{nepomuceno@ufsj.edu.br} \\
   {\small \textsuperscript{1,2,4,5}Control and Modelling Group (GCOM), Department of Eletrical Engineering,\\ Federal University of S\~ao Jo\~ao del-Rei, MG, Brazil}\\
   {\small \textsuperscript{3}Centro Universit\'ario Newton Paiva,  Belo Horizonte, MG, Brazil}\\}

\criartitulo


\begin{abstract}
{\bf Abstract}. The polynomial NARMAX (Nonlinear AutoRegressive Moving Average model with eXogenous input) is a  model that represents the dynamics of physical systems. This polynomial contains information from the past of the inputs and outputs of the process, that is, it is a recursive model. In digital computers this generates the propagation of the rounding error. Our procedure is based on the estimation of the maximum value of the lower bound error considering an arbitrary number of pseudo-orbits produced from  different natural interval extensions, and a posterior Lyapunov exponent calculation. We applied successfully our technique for two identified models of the systems: sine map and Duffing-Ueda oscillator. 

\noindent
{\bf Keywords}. Polynomial NARMAX, Lower Bound Error, Natural Interval Extension, Interval Arithmetic.
\end{abstract}

\section{Introduction}

System identification is one of the most consolidated and relevant fields of study in science. One of the aims of this science is to obtain mathematical models analogous to the phenomena observed in nature. By analogous systems is meant a system capable of reproducing the characteristics observed in nature. With the identification of systems it is possible to model and investigate systems in an attempt to find some pattern in the observations \cite{Billings2013,aguirre2015}. To identify a system, is necessary to assume a model capable of representing the linear and nonlinear characteristics of the system. Models are mathematical equations that try to describe an approximation of the real system. In the literature there are several ways to identify the same system \cite{NTAA2003}. Different mathematical and computational representations are used, it can be mentioned the neural networks, fuzzy logic, NARMAX (Nonlinear AutoRegressive Moving Average model with eXogenous input) models, among others. The representation of nonlinear systems can be obtained by means of the polynomial NARMAX  \cite{CB1989}. The nonlinear NARMAX polynomials are linear in the parameters, which allows the use of parameter estimation algorithms for linear models \cite{KBLM1988}.  This mathematical representation can be seen as a well-organized recursive function in which the parameters are cautiously chosen \cite{KBLM1988}.

In general, little attention has been given to the propagation of error in the area of system identification, especially in situations that present the polynomial NARMAX. One of the first works related to this subject was \cite{NM2016}. The authors have presented a theorem to estimate the lower bound error for the polynomial NARMAX. In that work, two pseudo-orbits from two different interval extensionswere used to estimate the lower bound error. And in \cite{nm2017}, it was found that the basin of attraction and the invariant distribution were not preserved, the authors shows that from two natural interval extensions may result in differents trajectories, using the lower bound error. 
Thus, this work aims to estimate the lower bound error for $n$ pseudo-orbits derived from $n$ different interval extensions. The proposed method is applied in two identified models of the systems: sine map and Duffing-Ueda oscillator. Afterwards, the  Lyapunov exponent, which is a parameter that characterizes the attractor dynamics. It measures the rate of divergence of neighboring orbits within the attractor, quantifying the dependence or sensitivity of the system relative to the initial conditions \cite{wolf1985}, is calculated and compares this result with the present in the literature for the method presented here and for the one proposed by \cite{MN2016}.

The rest of the paper is organized as follows. In Section \ref{sec:cp} we recall some preliminary concepts of lower bound error and representation the nonlinear systems. Then, in Section 3, we present the developed method. Section 4 is devoted to present the results, then the final remarks are given in Section 5.

 \section{Preliminary concepts}
 \label{sec:cp}
\subsection{The polynomial NARMAX}

The NARMAX model is a representation for nonlinear systems.
This model can be represented as  \cite{CB1989} 
\begin{equation}
 y(k) = F^l \left [y_{k\text{-}1},\cdots ,y_{k\text{-}n_y}, u_{k\text{-}1},  \cdots , u_{k\text{-}n_u},e_{k\text{-}1}, \cdots , e_{k\text{-}n_e} \right ]+ e_{k},  
\end{equation}
where $y_{k}$, $u_{k}$ e $e_{k}$ are, respectively, the output, the input and the noise terms at the discrete time $n \in \mathbb{N}$. The parameters $n_y$, $n_u$ e $n_e$ are their maximum lag. And $F^{\ell}$ is a nonlinear function of degree $\ell$.

\subsection{Recursive functions}

In recursive functions is possible to calculate the state $x_{n+1}$, at a give time, from an earlier state $x_n$
\begin{equation}
   x_{n+1}=f(x_n), 
\end{equation}
where $f$ is a recursive function and $x_n$ is a function state at the discrete time n. Given an initial condition $x_0$, successive applications of the function $f$ it is possible to know the sequence $\{x_n\} $. This sequence can be represented by $\{x_n\}=[x_0,x_1,\cdots,x_n]$ and is defined as an orbit.

Using the computer to calculated the recursive functions, numeric errors are propagated during successive calculations, then the true orbit is not calculated but a representation of the same that is called pseudo-orbit.
\begin{small}
\begin{equation}
\{\hat{x}_{i,n}\}=[\hat{x}_{i,0},\hat{x}_{i,1},\cdots,\hat{x}_{i,n}] \quad\mbox {such that} \quad {|x_n-\hat{x}_{i,n}|\le\delta_{i,n}} \label{eq:orb.1}
\end{equation}
\end{small}
where $\delta_{i,n} \in \mathbb{R}$ is an error and $\delta_{i,n}\ge0$.
So, we may define an interval associated with each value of a pseudo-orbit $I_{i,n}=[\hat{x}_{i,n}-\delta_{i,n}\,,\,\hat{x}_{i,n}+\delta_{i,n}].$ Thus
\begin{equation}
    x_n \in I_{i,n} \quad\mbox{for all i} \in \mathbb{N}.
\end{equation}


\subsection{Natural interval extension}
The natural interval extension is achieved by changing the sequence of arithmetic operation \cite{moore2009}, that is, the extensions are mathematically equivalents.

Furthermore, two extension which algebraically is the same function may not be equivalent in interval arithmetic.

\subsection{Lower bound error}

The lower bound error was proposed by \cite{NM2016}. It is a practical tool capable of increasing the reliability of the computational simulation of dynamic systems.
\begin{theo}
Let two pseudo-orbits $\{\hat{x}_{a,n}\}$ and $\{\hat{x}_{b,n}\}$ derived from two interval extensions. Let $\delta_{\alpha,n}=| \hat{x}_{a,n} - \hat{x}_{b,n}|/2$ be the lower bound error of a map $f(x)$, then $\delta_{a,n}\ge\delta_{\alpha,n}$ or $\delta_{b,n}\ge\delta_{\alpha,n}$.
\end{theo}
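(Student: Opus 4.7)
The plan is to prove this by combining the triangle inequality with a simple pigeonhole argument. The key observation is that both pseudo-orbits $\{\hat{x}_{a,n}\}$ and $\{\hat{x}_{b,n}\}$ bracket the true (unknown) orbit value $x_n$, with distances bounded by $\delta_{a,n}$ and $\delta_{b,n}$ respectively, as laid out in equation~(\ref{eq:orb.1}). Since $x_n$ is a common reference point, I can compare $\hat{x}_{a,n}$ and $\hat{x}_{b,n}$ through it.

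First I would invoke the triangle inequality through the true state $x_n$:
\begin{equation}
|\hat{x}_{a,n} - \hat{x}_{b,n}| \;\le\; |\hat{x}_{a,n} - x_n| + |x_n - \hat{x}_{b,n}| \;\le\; \delta_{a,n} + \delta_{b,n},
\end{equation}
where the second inequality uses the defining property of each pseudo-orbit given in~(\ref{eq:orb.1}). Dividing by $2$ and recalling that $\delta_{\alpha,n} = |\hat{x}_{a,n} - \hat{x}_{b,n}|/2$ gives $\delta_{\alpha,n} \le (\delta_{a,n} + \delta_{b,n})/2$, i.e.\ $\delta_{\alpha,n}$ cannot exceed the average of the two true errors.

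Next I would finish by contradiction (equivalently, a pigeonhole observation on the mean): suppose both $\delta_{a,n} < \delta_{\alpha,n}$ and $\delta_{b,n} < \delta_{\alpha,n}$ held simultaneously. Then their average would satisfy $(\delta_{a,n}+\delta_{b,n})/2 < \delta_{\alpha,n}$, contradicting the inequality above. Hence at least one of the two bounds must satisfy $\delta_{a,n} \ge \delta_{\alpha,n}$ or $\delta_{b,n} \ge \delta_{\alpha,n}$, which is exactly the claim.

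I do not expect any serious obstacle: the argument is essentially the statement that the midpoint distance between two approximations is dominated by the larger of their individual true errors. The only subtle point worth flagging explicitly is that the statement is a disjunction, not a conjunction — one cannot conclude which pseudo-orbit is the worse one, only that $\delta_{\alpha,n}$ is a valid lower bound for the maximum of the two unknown errors, which is exactly what justifies calling $\delta_{\alpha,n}$ a \emph{lower bound error}.
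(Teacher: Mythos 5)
Your argument is correct and is essentially the same as the one the paper relies on: the paper itself defers the proof of this theorem to \cite{NM2016}, but its own proof of the generalization (Theorem~2) is exactly your reasoning phrased in interval language --- assuming both $\delta_{a,n}<\delta_{\alpha,n}$ and $\delta_{b,n}<\delta_{\alpha,n}$ forces the two intervals $I_{a,n}$ and $I_{b,n}$, each of which must contain the true state $x_n$, to be disjoint, which is precisely the contrapositive of your triangle-inequality-plus-pigeonhole step. Nothing further is needed.
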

The proof of this theorem can be found in \cite{NM2016}.

\section{Methods}

This section is an extension of the work of \cite{NM2016}. The authors developed the Lower bound error theorem for two pseudo-orbits from two different interval extensions. But, for a same map may exist more than two natural interval extensions, so the objective is to investigate the behavior of the natural interval extensions in the computer, exploring the effect of interval dependence, due to the repeated presence of a same interval variable in an algebraic expression, then check on the lower bound error of the pseudo-orbits derived from the different natural interval extensions and  calculate the Lyapunov exponent of the lower bound error from $n$ pseudo-orbits and compare this value. It was clear that the function has more than two extensions, that is, can be rewritten in different ways.

The proposed method can be summarized in the following steps:
\begin{enumerate}
    \item Choose the natural interval extensions;
    \item Calculate the sequence of points of each system from the chosen extensions;
    \item Determine the lower bound error from the combination of two functions;
    \item Determine the maximum lower bound error;
    \item Calculate the Lyapunov exponent and compare the result with the present in literature.
\end{enumerate}
\subsection{Generalization of the lower bound error}
The generalization of the lower bound error is presented in the following theorem.
\begin{theo}
Let an arbitrary number $k \in \mathbb{Z^+}$ of pseudo-orbits derived from interval extensions.  $\zeta_{n}=\cfrac{\max |(\hat{x}_{i,n}-\hat{x}_{j,n})|}{2}$ is the lower bound error, subject to $i \neq j$, $i,j \in \mathbb{N}$, $i \leq k$ and $j \leq k$.
\end{theo}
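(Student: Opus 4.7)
The plan is to reduce the general $k$-orbit statement to the two-orbit theorem (Theorem 1) by picking the pair of pseudo-orbits that realizes the maximum separation and invoking the known result on that pair.

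First, I would recall the meaning of a lower bound error in the sense of Theorem 1: a quantity $\delta_{\alpha,n}$ is a lower bound error at step $n$ when at least one pseudo-orbit in the collection under consideration has true error $\delta_{\cdot,n}$ no smaller than $\delta_{\alpha,n}$. With $k$ pseudo-orbits $\{\hat{x}_{1,n}\},\dots,\{\hat{x}_{k,n}\}$, I would form the finite set
\begin{equation}
S_n = \left\{ \tfrac{1}{2}\,|\hat{x}_{i,n} - \hat{x}_{j,n}| \;:\; 1 \le i < j \le k \right\},
\end{equation}
which is non-empty since $k \ge 2$ (the case $k=1$ being trivial as $\zeta_n = 0$). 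Since $S_n$ is finite, the maximum is attained; let $(i^\star, j^\star)$ be a pair achieving $\zeta_n = \max S_n$.

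Next, I would apply Theorem 1 directly to the two pseudo-orbits $\{\hat{x}_{i^\star,n}\}$ and $\{\hat{x}_{j^\star,n}\}$. Theorem 1 guarantees that $\delta_{i^\star,n} \ge \tfrac{1}{2}|\hat{x}_{i^\star,n} - \hat{x}_{j^\star,n}| = \zeta_n$ or $\delta_{j^\star,n} \ge \zeta_n$. In either case, there is at least one pseudo-orbit in the enlarged collection of $k$ pseudo-orbits whose true error is bounded below by $\zeta_n$, which is exactly what is needed for $\zeta_n$ to qualify as a lower bound error for the family. This gives the desired conclusion.

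The argument is essentially a one-line reduction, so there is no deep obstacle; the only subtlety worth flagging in the write-up is to make explicit that passing from pairs to the maximum over pairs does not weaken the bound — the pair $(i^\star,j^\star)$ already lies in the collection, so Theorem 1 still applies to it even though other orbits are present. I would also note that $\zeta_n$ is the sharpest estimate obtainable this way from pairwise comparisons, since any smaller candidate $\tfrac{1}{2}|\hat{x}_{i,n}-\hat{x}_{j,n}|$ with $(i,j)\ne(i^\star,j^\star)$ is dominated by $\zeta_n$ by construction, justifying the use of the maximum in the definition.
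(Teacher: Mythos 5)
Your proof is correct and follows essentially the same route as the paper: both arguments hinge on singling out the pair of pseudo-orbits realizing the maximum separation and applying the two-orbit lower-bound-error reasoning to that pair. The only difference is presentational --- you invoke Theorem 1 as a black box on the maximizing pair, whereas the paper re-runs the underlying contradiction argument (the two intervals around the maximizing pair would be disjoint, yet both must contain the true state $x_n$) inline.
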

\begin{proof}
The proof is conducted by \textit{reduction ad absurdum}. Conversely, let the distance between two pseudo-orbit given by $\gamma_n=|(\hat{x}_{i,n}-\hat{x}_{j,n})|$ and let us assume that it is possible to have a lower bound error described by  $$\beta_{n}<\cfrac{\max |(\hat{x}_{i,n}-\hat{x}_{j,n})|}{2}.$$ 
\begin{equation*}
\mbox {Then}, \quad I_{i,n}=[\hat{x}_{i,n}-\beta_{i,n}\,,\,\hat{x}_{i,n}+\beta_{i,n}] \quad\mbox {and} \quad I_{j,n}=[\hat{x}_{j,n}-\beta_{j,n}\,,\,\hat{x}_{j,n}+\beta_{j,n}],
\end{equation*}
for all $i$ and $j$. If it is true, considering the two pseudo-orbits, let us say, $a$ and $b$, for which we have maximum distance between them, it implies that  $I_{a,n} \cap I_{b,n}=\emptyset$ which is a contradiction. And that completes the proof.
\end{proof}

It is clear that for the case of two pseudo-orbits, this theorem is equivalent to that one proposed by  \cite{NM2016}.

\section{Results}
In this section, we present the lower bound error applied for two case studies, which exhibit nonlinear dynamics. We select some natural interval extensions that are equivalent. The two chosen models are for the systems sine map and Duffing-Ueda and all routines are performed in Matlab R2016a in a double precision. We used a computer with a processor Dual core @ 2.7GHz and a Windows 8.1 Professional operating system.

\subsection{Sine map}

A unidimensional sine map is defined as
\begin{equation}
    x_{n+1}=\alpha \sin(x_n),
    \label{eq:sine}
\end{equation}
where $\alpha=1.2\pi$. A polynomial NARMAX identified for this system is given by \cite{NTAA2003}
\begin{equation}
    y_{n+1}=2.6868y_n-0.2462y^3_n.
    \label{eq:1}
\end{equation}
Let us consider four equivalent interval extensions of the model \ref{eq:1}:
\begin{eqnarray}
  F(X_n)&=&2.6868X_n-0.2462X^3_n, \label{eq:2}\\
  G(X_n)&=&2.6868X_n-(0.2462X_n)X^2_n, \label{eq:3} \\
  H(X_n)&=&2.6868X_n-0.2462X_nX_nX_n, \label{eq:4}\\
  L(X_n)&=&X_n(2.6868-0.2462X_nX_n) \label{eq:5}.
 \end{eqnarray}
Equations (\ref{eq:2})-(\ref{eq:5}) are mathematically equivalent, but they represent a different sequence of arithmetic operations. These extensions were simulated using $X_0=0.1$ as an initial condition.
Figure \ref{fig1} shows the evolution of the maximum lower bound error for the sime map and still shows the Lyapunov exponent associated with these extensions using the method developed in \cite{MN2016}. The literature indicates a Lyapunov exponent equals to 1.15 bits/s \cite{NTAA2003}. It is clear that the result presented in Figure \ref{fig1} is in good agreement.

\subsection{Duffing-Ueda}

Considering a damped, periodically forced nonlinear Duffing-Ueda oscillator \cite{Billings2013}:
\begin{equation}
    \frac{d^2y}{dt^2}+k\frac{dy}{dt}+ \mu y^3=A\cos(t).
\end{equation}
where $\mu$ is the cubic stiffness parameter, $k$ is a linear damping and $A$ is the amplitude of excitation. A polynomial NARMAX for the Duffing-Ueda oscillator was identified by \cite{AB1994}.
\begin{small}
\begin{eqnarray}
y_{n+1}&=& 2.1579y_n-1.3203y_{n-1}+0.16239y_{n-2}+0.0003416u_n+0.001963u_{n-1} \nonumber\\ \label{eq:duffing}
&&-0.0048196y^3_n+0.003523y^2_ny_{n-1}-0.0012162y_ny_{n-1}y_{n-2}+0.0002248y^3_{n-2}
\end{eqnarray}
\end{small}
\noindent where $u=A \cos(kT_s)$, $n \in \mathbb{N}$ and $T_s= \pi /60$. Let us consider four interval extensions of Eq. \eqref{eq:duffing}:

\begin{footnotesize}
\begin{eqnarray}
F(X_n)&=& 2.1579X_n\text{-}1.3203X_{n\text{-}1}+0.16239X_{n\text{-}2}+0.0003416U_n+0.001963U_{n\text{-}1} \nonumber\\ 
&&-0.0048196X^3_n+0.003523X^2_nX_{n\text{-}1}\text{-}0.0012162X_nX_{n\text{-}1}X_{n\text{-}2}+0.0002248X^3_{n\text{-}2}\\ \label{eq:duffing1}
G(X_n)&=& 0.0003416U_n+0.001963U_{n\text{-}1}+2.1579X_n\text{-}1.3203X_{n\text{-}1}+0.16239X_{n\text{-}2} \nonumber\\ 
&&-0.0048196X^3_n+0.003523X^2_nX_{n\text{-}1}-0.0012162X_nX_{n\text{-}1}X_{n\text{-}2}+0.0002248X^3_{n\text{-}2}\\ \label{eq:duffing2}
H(X_n)&=& 0.0003416U_n+0.001963U_{n\text{-}1}+2.1579X_n-1.3203X_{n\text{-}1}+0.16239X_{n\text{-}2}-0.0048196X^3_n \nonumber\\ 
&&+0.003523X^2_nX_{n\text{-}1}-0.0012162X_nX_{n\text{-}1}X_{n\text{-}2}+0.0002248X_{n\text{-}2}X_{n\text{-}2}X_{n\text{-}2}\\ \label{eq:duffing3}
L(X_n)&=& 2.1579X_n-1.3203X_{n\text{-}1}+0.16239X_{n\text{-}2}+0.0003416U_n+0.001963U_{n\text{-}1}-0.0048196X_n \nonumber\\
&&X_nX_n+0.003523X^2_nX_{n\text{-}1}-0.0012162X_nX_{n\text{-}1}X_{n\text{-}2}+0.0002248X^3_{n\text{-}2}\label{eq:duffing4}
\end{eqnarray}
\end{footnotesize}

Figure \ref{fig2} shows the evolutions of the maximum lower bound error for the Duffing\text{-}Ueda oscillator with the Lyapunov exponent associated. This largest Lyapunov was calculated by method developed in \cite{MN2016} with value 0.1202 for the maximum of the pseudo-orbits. Oncemore, the computation are in good agreement with the values found in literature, which for this model was calculated in 0.115 \cite{AB1994}.
\begin{figure}[ht!]
\center
\subfigure[fig1][Sine Map]{\includegraphics[width=6.5cm,height=4.5cm]{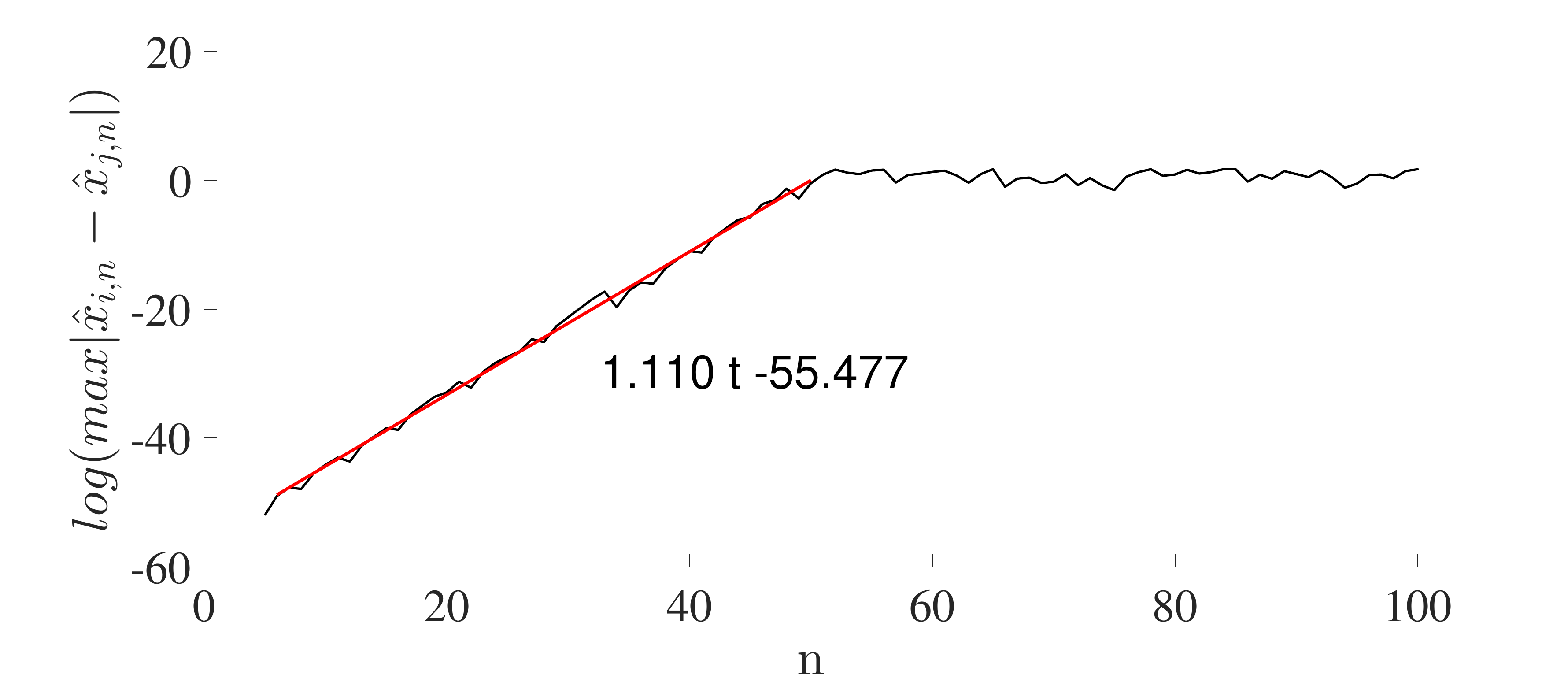}\label{fig1}}
\qquad
\subfigure[fig2][Duffing-Ueda Oscillator]{\includegraphics[width=6.5cm,height=4.5cm]{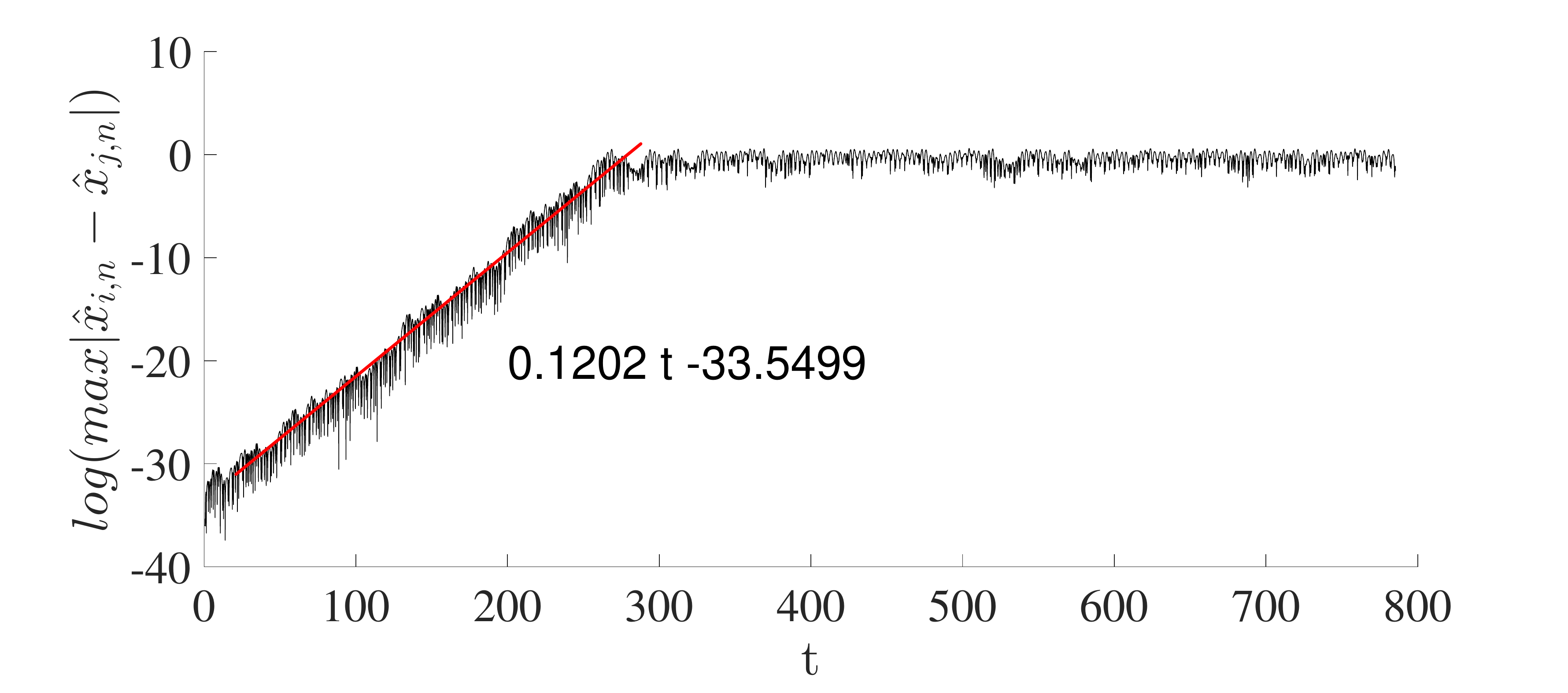}\label{fig2}}
\caption{Evolution of the maximum lower bound error.}
\end{figure}
\section{Conclusion}

The errors present in numerical simulations can lead to an erroneous result that does not correspond to the real situation of the problem. These errors can be of the representation of the model, of the insertion of data, of the numerical algorithm, due to the simplifications, of truncation, of rounding and among others. Thus, some methods have been proposed in order to measure these error, but they are complex from the computational point of view and from the mathematical approximation.

We presented a method to calculate a lower bound error for free-run simulation of the polynomial NARMAX. Our method is based on the comparison of $k$ pseudo-orbits of the same models, but derived from different extension intervals. It makes use of recursive functions, which increases the relevance of this observation.

The methodology was applied in two cases, which are examples of identified systems obtained from literature, by means of the polynomial NARMAX. The sine map and Duffing Ueda oscillator are well known chaotic systems and have been identified using the polynomial NARMAX.

When we compare $k$ pseudo-orbits that are equivalent from the point of view of interval analysis, we proved a theorem that the maximum distance of the pseudo-orbits is greater than the distance of two pseudo-orbits. This maximum value represents a small difference respect to the lower bound error for two pseudo-orbits, but reduces the lower bound error. To prove this statement, the Lyapunov exponent was calculated for the maximum lower bound error, and in both models (sine map and Duffing-Ueda) the result was very close and in good agreement with values calculated in literature.

\section*{Acknowledgments}

We thank CAPES, CNPq/INERGE, FAPEMIG and UFSJ for their support.


\end{document}